\newtheorem{lemma}{Lemma}
\newtheorem{proof}{Proof}
\begin{document}

\title{A New Distributed Localization Method for Sensor Networks}

\author{\IEEEauthorblockN{Yingfei Diao}
\IEEEauthorblockA{School of Control Science\\ and Engineering\\
Shandong University\\
Jinan, Shandong 250061\\ China\\
Email: yfdiao@gmail.com}
\and
\IEEEauthorblockN{Zhiyun Lin}
\IEEEauthorblockA{College of Electrical \\Engineering\\
Zhejiang University\\
Hangzhou, Zhejiang 310007\\China\\
Email: linz@zju.edu.cn}
\and
\IEEEauthorblockN{Minyue Fu}
\IEEEauthorblockA{School of Electrical Engineering \\
and Computer Science\\
University of Newcastle\\
Callaghan, NSW 2308 \\Australia\\
Email: minyue.fu@newcastle.edu.au}
\and
\IEEEauthorblockN{Huanshui Zhang}
\IEEEauthorblockA{School of Control Science\\ and Engineering\\
Shandong University\\
Jinan, Shandong 250061\\China\\
Email: hszhang@sdu.edu.cn}}

\maketitle

\begin{abstract}
This paper studies the problem of determining the sensor locations in a large sensor network using relative distance (range) measurements only. Our work follows from a seminal paper by Khan et al. \cite{diloc} where a distributed algorithm, known as DILOC, for sensor localization is given using the barycentric coordinate. A main limitation of the DILOC algorithm is that all sensor nodes must be inside the convex hull of the anchor nodes. In this paper, we consider a general sensor network without the convex hull assumption, which incurs challenges in determining the sign pattern of the barycentric coordinate. A criterion is developed to address this issue based on available distance measurements. Also, a new distributed algorithm is proposed to guarantee the asymptotic localization of all localizable sensor nodes.
\end{abstract}

\IEEEpeerreviewmaketitle

\section{Introduction}
Location based service is the fundamental issue in the area of sensor networks and it requires to solve the localization problem. The localization problem consists of two parts, namely, acquiring measurements and transforming them to coordinate information. In this paper, we consider the situation of using relative distance measurements only and focus on the localization procedure.

Existing work on localization can be divided into two classes \cite{cricket}: sequential methods and concurrent methods. A sequential method begins with a set of anchor nodes and computes the locations of other nodes one by one or group by group. A prominent example is the so-called  \emph{trilateration} method. Its advantage is that it is easy to implement, but it requires each location-unknown node to have three connections (edges) with location-known nodes, which is a sufficient but not necessary condition for localizability.

A concurrent method starts with some initial estimate for the coordinate of every sensor node. Each node then updates its coordinate in a distributed or cooperative manner using the relative distance measurements with its neighbors and the estimates of the neighbors' coordinates. The iterative process terminates when the estimates converge, hopefully to the true coordinates.

A novel concurrent method called distributed iterative localization (DILOC) was given by \emph{Khan et. al.} ~in \cite{diloc} based on the barycentric coordinate representation for sensor localization. The unique feature of this method is that the sensor locations can be expressed as a linear system, which can be solved iteratively due to the desirable eigenvalue distribution of the linear system. Their method relies on two key assumptions on the network topology. First, all location-unknown nodes must be inside the \emph{convex hull} of the anchor nodes. Second, every node, other than the anchor nodes, must be inside a triangle of three neighboring nodes. The main reasons of requiring these two assumptions are to avoid the difficulties in determining the sign patterns of the barycentric coordinate and to ensure the desired eigenvalue distribution of the linear system so that the iterative algorithm asymptotically converges. These two assumptions are kind of restrictive, especially for a large sensor network when the sensing range for each node is limited and when the neighbors for each node cannot be arbitrarily arranged.

In this paper, we follow the work of \cite{diloc} by generalizing their method and eliminating the two key assumptions as mentioned above. The main idea is to employ a general form of barycentric coordinate representation which allows the coordinate of each node to be expressed as a linear function of the coordinates of any three neighbors. A criterion is developed, according to which the linear function can be determined using the relative distance measurements alone. The implication of this result is that all the sensor locations can be expressed as a linear system, just like in the standard DILOC case. However, the new linear system may not have a desired eigenvalue distribution like the standard DILOC algorithm to work. We then provide a new distributed iterative algorithm for localization. This is done by applying a diagonal pre-conditioner to the linear system. A simulation result is also provided to validate the effectiveness of our proposed algorithm.


{\bf Notations:} $\mathbb{C}$ denotes the set of complex number. $\mathbb{E}$ denotes the Euclidean space. $\mathbf{1}_n$ represents the $n$-dimensional vector of ones and
$I_n$ denotes the identity matrix of order $n$. The bold font of letter indicates vector and capital letter indicates matrix. $\Delta_{ijk}$ denotes a triangle formed by node $i$, $j$ and $k$.

\hfill

\section{Preliminaries and Problem Formulation}\label{sec2}
\setcounter{equation}{0} 

\subsection{Barycentric coordinates}

The barycentric coordinate, which was firstly introduced by August Ferdinand M\"obius in 1827 \cite{cox}, is a geometric notion characterizing the relative position of one node with respect to its several neighbor nodes. For one node, say $l$ with its Euclidean coordinate $p_l$, and its three neighbor nodes, say $i$, $j$ and $k$ with their Euclidean coordinates $p_i$, $p_j$ and $p_k$ in the plane, node $l$'s barycentric coordinate with respect to $i$, $j$ and $k$ is $\{a_{li},a_{lj},a_{lk}\}$ satisfying \begin{equation}\label{bcd}
p_l=a_{li}p_i+a_{lj}p_j+a_{lk}p_k.
\end{equation}
Especially, when $a_{li}+a_{lj}+a_{lk}=1$, the barycentric coordinate is called the \emph{areal} coordinate because it can be expressed as a ratio of signed areas between specified triangles. As shown in Fig.~\ref{fig:tri}, the barycentric coordinate $\{a_{li}, a_{lj}, a_{lk} \}$ is given by
\begin{equation}\label{def:a}
  \left\{
   \begin{array}{l}
a_{li}=\frac{S_{\Delta ljk}}{S_{\Delta ijk}}\\
a_{lj}=\frac{S_{\Delta lki}}{S_{\Delta ijk}}\\
a_{lk}=\frac{S_{\Delta lij}}{S_{\Delta ijk}}
   \end{array}
   \right.
\end{equation}
 where $S_{\Delta ljk}$, $S_{\Delta lki}$, $S_{\Delta lij}$ and $S_{\Delta ijk}$ are the signed areas of the corresponding triangles $\Delta ljk$, $\Delta lki$, $\Delta lij$ and $\Delta ijk$. These areas can be calculated with pairwise internode distance measurements through {\em Cayley-Menger~determinant} \cite{cayley}. For instance,
\begin{eqnarray}\label{cme}
S_{\Delta ljk}^{2}=-\frac{1}{16}
\left|
\begin{array}{cccc}
              0 &  1 & 1 &1\\
              1 &  0 & d_{lj}^2 & d_{lk}^2\\
              1 & d_{jl}^2  & 0 & d_{jk}^2\\
              1  & d_{kl}^2 & d_{kj}^2 & 0

              \end{array}
                             \right|
                            \end{eqnarray}
where $d_{lj}$, $d_{lk}$ and $d_{jk}$ are the distance measurements among node $l$, $j$ and $k$, respectively. The sign of $S_{\Delta ljk}$ is positive if node $l$ is on the left-hand side when one moves from node $j$ to $k$, and negative otherwise.

\begin{figure}[h]
\begin{center}
      \psfrag{i}{$i$}
      \psfrag{j}{$j$}
      \psfrag{k}{$k$}
      \psfrag{l}[1c]{$l$}
      \psfrag{dli}{$d_{li}$}
      \psfrag{dlj}[1c]{$d_{lj}$}
      \psfrag{dlk}{$d_{lk}$}
      \includegraphics[width=0.35\textwidth] {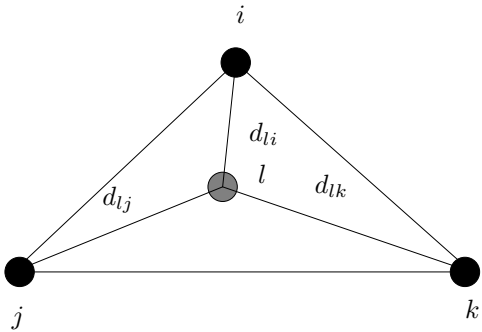}
      \caption{An illustrative example for the barycentric coordinate. }
      \label{fig:tri}
\end{center}
\end{figure}

To avoid the case that $S_{\Delta ijk}=0$, we need an assumption on the configuration of node $l$'s three neighbors as below.
\begin{enumerate}[{\bf A}0:]
\item For each node in the network, its three neighbors are not collinear.
\end{enumerate}

Note that, the computation of the coefficients in terms of \eqref{def:a}, i.e., $a_{li}$, $a_{lj}$ and $a_{lk}$, depends on the signed value of $S_{\Delta ljk}$, $S_{\Delta lki}$ and $S_{\Delta lij}$. If we only know the pairwise distance measurements, we can compute the square values of these areas and thus the absolute values of $a_{li}$, $a_{lj}$ and $a_{lk}$. However, we cannot determine the signs of these areas and thus are not able to have the barycentric coordinate.

\subsection{Problem formulation}

The common used trilateration scheme for computing the coordinate of node $l$ is to solve a group of equations like
\begin{equation}\label{def:d}
  \left\{
   \begin{array}{l}
d_{li}=\|p_l-p_i\|\\
d_{lj}=\|p_l-p_j\|\\
d_{lk}=\|p_l-p_k\|.
   \end{array}
   \right.
\end{equation}
Here, $p_u$, $u\in\{i, j, k, l\}$, is the Euclidean coordinate of node $u$ and $d_{uv}$, $u, v\in\{i, j, k, l\}$, is the distance measurement between node $u$ and $v$. These equations can be solved in a sequential way if each node has at least three distance measurements to other nodes that know their coordinates. We call those sensor nodes, who initially know their coordinates, the \emph{anchor nodes}. In contrast, we call the nodes, who do not know their coordinates initially, the normal sensor node or just \emph{sensor nodes} for short. In this paper, a network under consideration is assumed to contain at least three anchor nodes, which is a necessity for uniquely localizing the network.

Instead of solving these nonlinear equations in a sequential way, \emph{Khan et. al.} provide an iterative algorithm \cite{diloc}, named $DILOC$, to compute the coordinates of a network $\mathcal{G}$ based on the barycentric coordinate presentation.

Given a network $\mathcal{G}$, containing $n$ nodes, its Euclidean coordinates can be written in a form like

\begin{equation}\label{eq_cac}
\mathbf{p}=A\mathbf{p},
\end{equation}
where $\mathbf{p}\in\mathbb{C}^n$ is the aggregated Euclidean coordinate of $\mathcal{G}$ and the nonzero inputs in the $i$-th row, $i\in\{1,2,\cdots,n\}$ are the barycentric coordinate of node $i$. For example, if node $l$ has three neighbors $i$, $j$ and $k$, then the $l$-th row of $A$ has nonzero entries in the positions corresponding to the $i$-th, $j$-th and $k$-th columns. The other entries in the $l$-th row are all zeros.

If we consider the first three rows of $\mathbf{p}$ to represent the positions of the three anchor nodes in $\mathcal{G}$, then in terms of the natural partition of anchor nodes and normal sensor nodes, we can partition $A$ and $\mathbf{p}$ as
\begin{equation}\label{ebp}
A=
\begin{bmatrix}
I_3 &0\\
B   & C
\end{bmatrix},~~
\mathbf{p}=
\begin{bmatrix}
\mathbf{p}_a\\
\mathbf{p}_s
\end{bmatrix}
\end{equation}
where $\mathbf{p}_a$ and $\mathbf{p}_s$ correspond to the aggregate positions of anchor nodes and other normal sensor nodes, respectively. The nonzero entries of the $l$-th row can be recognized as the weights of its three neighbors. So, all diagonal inputs of $C$ are zeros. The matrix $A$ is often treated as the {\em adjacency matrix} of the network. Thus, the representation of $\mathbf{p}_s$ can be written as
\begin{equation}\label{i_diloc}
\mathbf{p}_s=C\mathbf{p}_s+B\mathbf{p}_a
\end{equation}
and
\begin{equation}\label{i_diloc2}
(I-C)\mathbf{p}_s=B\mathbf{p}_a.
\end{equation}

According to eq. (\ref{i_diloc}), we could obtain an iterative algorithm to solve $\mathbf{p}_s$ under certain conditions,
\begin{equation}\label{edi2}
\mathbf{z}_s(t+1)=C\mathbf{z}_s(t)+B\mathbf{p}_a(t)
\end{equation}
where $\mathbf{z}_s \in\mathbb{C}^{n-3}$ represents the estimate of the coordinate $\mathbf{p}_s$ of the normal sensor nodes in $\mathcal{G}$.

In \cite{diloc}, a sensor network to be localized is assumed to satisfy the following two assumptions.
\begin{enumerate}[{\bf A}1:]
\item All sensor nodes lie inside a convex hull formed by the anchor nodes.
\item Each sensor node $l$ lies inside a convex hull formed by its three neighbors. \end{enumerate}

In the 2D case, a convex hull is actually a triangle formed by three nodes. Thus, assumption {\bf A}2 leads to two constraints of the barycentric coordinate, i.e.,
\begin{eqnarray}
a_{li}+a_{lj}+a_{lk}=1,\label{cons1}\\
0<a_{li},a_{lj},a_{lk}<1.\label{cons2}
\end{eqnarray}
In \cite{diloc}, {\em Khan et. al.} proved that the spectral radius of $C$ is less than 1 when (\ref{cons1}) and (\ref{cons2}) hold. Thus, the estimate of the coordinate in (\ref{edi2}) can converge to the true value in the Euclidean coordinate system.

In this paper, we will relax the assumptions {\bf A}1 and {\bf A}2 and address a distributed algorithm to compute the locations of sensor nodes no matter whether they lie inside a convex hull or not. After relaxing these two assumptions, two problems need to be addressed. First, how to determine the signs of the barycentric coordinate when one node lies outside the convex hull of its neighbors. Second, when the convex hull assumption is dropped, the matrix $C$ in system (\ref{edi2}) might not be Schur. Then, how to provide a convergent iterative algorithm to compute the coordinate.

\section{Sign pattern determination for the barycentric coordinate}\label{sec3}

For $u, v\in\{i, j, k, l\}$, we use $\sigma_{uv} \in\{1, -1\}$  to indicate the sign of $a_{uv}$. Suppose node $l$ is localizable. It is known that no matter a node $l$ lies inside the convex hull of its three neighbors or not, the barycentric coordinate $\{a_{li}, a_{lj}, a_{lk}\}$ obtained from eq.~(\ref{def:a}) must satisfy $a_{li}+a_{lj}+a_{lk}=1$. Thus, given $|a_{li}|, |a_{lj}|, |a_{lk}|$, which can be calculated from \eqref{cme}, the problem of determining the sign pattern of the barycentric coordinate is equivalent to solve the following equation
\begin{equation}\label{eqsbc}
\sigma_{li}|a_{li}|+\sigma_{lj}|a_{lj}|+\sigma_{lk}|a_{lk}|=1
\end{equation}
where $\sigma_{li}, \sigma_{lj}$ and $\sigma_{lk}$ take values either $1$ or $-1$.

In the following we will discuss whether~\eqref{eqsbc} has a unique solution.  Moreover, if $\sigma_{li}, \sigma_{lj}$ and $\sigma_{lk}$ can not be uniquely solved from~\eqref{eqsbc}, we then explore other range based conditions to determine the sign pattern.

The first case that~\eqref{eqsbc} does not have a unique solution is that one of $|a_{li}|, |a_{lj}|, |a_{lk}|$ equals to zero. That is, a node $l$ lies on the line aligned with one of three edges of  the triangle formed by its three neighbors, according to (\ref{def:a}). Without loss of generality, say $a_{li}=0$. For this case, $\sigma_{li}$ can be either $1$ or $-1$. But the other two signs $\sigma_{lj}$ and $\sigma_{lk}$ can be determined according to the following criterion.
\begin{equation}\label{eqzero}
\{\sigma_{li}, \sigma_{lj}, \sigma_{lk}\}=
\left\{
   \begin{array}{ll}
\{\sigma_{li}, 1, 1\} & \text{if } |a_{lj}|, |a_{lk}|<1,\\
\{\sigma_{li}, 1, -1\} & \text{if }|a_{lj}|>1, |a_{lj}|>|a_{lk}|,\\
\{\sigma_{li}, -1, 1\} & \text{if }|a_{lk}|>1, |a_{lk}|>|a_{lj}|.
   \end{array}
   \right.
\end{equation}

If node $l$ does not lie on the boundary lines, there are totally 7 possible sign patterns as the pattern $\{-1, -1, -1\}$ is not possible due to~\eqref{eqsbc}. According to~(\ref{def:a}) and the definition of the signed areas, the seven possible sign patterns of $\{\sigma_{li}, \sigma_{lj}, \sigma_{lk}\}$  are shown in Fig.~\ref{fig:bc7}.
\begin{figure}[h]
\begin{center}
      \psfrag{b}{$b$}
      \psfrag{j}[lb]{$j$}
      \psfrag{k}[1b]{$k$}
      \psfrag{i}{$i$}
      \psfrag{1}[1c]{$\{1,1,1\}$}
      \psfrag{2}[1c]{$\{-1,1,1\}$}
      \psfrag{3}[1c]{$\{-1,-1,1\}$}
      \psfrag{4}[1c]{$\{1,-1,1\}$}
      \psfrag{5}[1c]{$\{1,-1,-1\}$}
      \psfrag{6}[1c]{$\{1,1,-1\}$}
      \psfrag{7}[cb]{$\{-1,1,-1\}$}
      \includegraphics[width=0.4\textwidth] {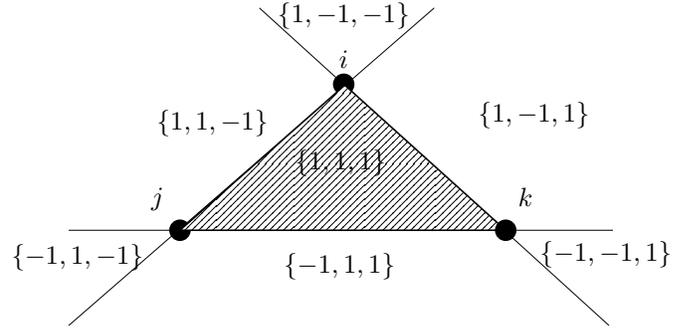}
      \caption{Seven possible sign patterns for $\{\sigma_{li}, \sigma_{lj}, \sigma_{lk}\}$.}
      \label{fig:bc7}
\end{center}
\end{figure}
In the following lemma, we characterize the second case when the sign pattern can not be uniquely solved from~\eqref{eqsbc}.

\begin{lemma}
Given $|a_{li}| \neq 0$, $|a_{lj}| \neq 0$, and $|a_{lk}| \neq 0$, the solution of~(\ref{eqsbc}) does not result in a unique sign pattern $\{\sigma_{li}, \sigma_{lj}, \sigma_{lk}\}$  if and only if one of them, saying $a_{li}$, satisfies $|a_{li}|=1$, and $|a_{lj}|=|a_{lk}|$.
\end{lemma}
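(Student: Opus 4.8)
The plan is to read \eqref{eqsbc} purely as a linear equation in the unknown signs and determine, as the positive data $x:=|a_{li}|$, $y:=|a_{lj}|$, $z:=|a_{lk}|$ vary, when it fails to single out one triple $(\sigma_{li},\sigma_{lj},\sigma_{lk})\in\{1,-1\}^3$. Since $x,y,z>0$, the all-negative pattern makes the left-hand side negative and is discarded, leaving the seven candidates of Fig.~\ref{fig:bc7}; I want to show non-uniqueness holds exactly when one weight equals $1$ and the other two are equal.

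For the forward direction, suppose two distinct patterns $(\sigma_{li},\sigma_{lj},\sigma_{lk})$ and $(\sigma'_{li},\sigma'_{lj},\sigma'_{lk})$ both satisfy \eqref{eqsbc}. Subtracting the two copies of the equation gives
\[
(\sigma_{li}-\sigma'_{li})\,x+(\sigma_{lj}-\sigma'_{lj})\,y+(\sigma_{lk}-\sigma'_{lk})\,z=0,
\]
where each coefficient lies in $\{0,\pm2\}$ and vanishes precisely on the coordinates where the two patterns agree. I will classify solutions by the number of differing coordinates.

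The crux is this case analysis. The patterns cannot differ in exactly one coordinate, since that would force $\pm2$ times a single strictly positive weight to vanish. They cannot differ in all three coordinates either: then the second pattern is the negative of the first, so the two left-hand sides of \eqref{eqsbc} are negatives of one another and cannot both equal $1$. Hence they differ in exactly two coordinates, say—after relabelling—$j$ and $k$, so $\sigma_{li}=\sigma'_{li}$. If the pair $(\sigma_{lj},\sigma_{lk})$ is $(1,1)$ in one pattern and $(-1,-1)$ in the other, the displayed identity forces $\pm2(y+z)=0$, which is impossible; the only surviving possibility is $(1,-1)$ versus $(-1,1)$, yielding $\pm2(y-z)=0$, i.e. $|a_{lj}|=|a_{lk}|$. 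Substituting $y=z$ into either copy of \eqref{eqsbc} cancels the $j,k$ contribution and leaves $\sigma_{li}x=1$, so $\sigma_{li}=1$ and $|a_{li}|=1$—exactly the asserted condition. The converse is then immediate: when $|a_{li}|=1$ and $|a_{lj}|=|a_{lk}|$, the two distinct patterns $\{1,1,-1\}$ and $\{1,-1,1\}$ both satisfy \eqref{eqsbc}, so the sign pattern is not unique.

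I expect the only delicate point to be keeping the two-coordinate case exhaustive and symmetric: there are three choices of which coordinate stays fixed, and within each the ``aligned'' flip $(1,1)\leftrightarrow(-1,-1)$ must be ruled out while the ``crossed'' flip $(1,-1)\leftrightarrow(-1,1)$ is shown to produce a genuine second solution. Everything else reduces to sign bookkeeping in the displayed difference equation, together with the already-established fact that the areal coordinates sum to $1$, which is what fixes the right-hand side of \eqref{eqsbc}.
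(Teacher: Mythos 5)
Your proof is correct, and its necessity argument takes a genuinely different (and arguably cleaner) route than the paper's. The paper stacks the two competing solutions and treats each coordinate as a row vector $\mathbf{v}_m \in \{[1~1],[-1~1],[-1~-1],[1~-1]\}$, so that \eqref{eqsbc} becomes a positive combination of these vectors equal to $[1~1]$; it then appeals to a geometric argument (Fig.~\ref{fig:ax}) to conclude that some $\mathbf{v}_m$ must equal $[1~1]$, and finishes by cases on the remaining two vectors. You instead subtract the two copies of \eqref{eqsbc}, obtaining a vanishing linear combination with coefficients in $\{0,\pm 2\}$, and classify by the number of coordinates in which the two patterns disagree: disagreement in exactly one coordinate is impossible since a single strictly positive weight cannot vanish; disagreement in all three is impossible since the patterns would then be negatives of each other and the two left-hand sides could not both equal $1$; in the remaining two-coordinate case the aligned flip $(1,1)\leftrightarrow(-1,-1)$ is ruled out, while the crossed flip forces $|a_{lj}|=|a_{lk}|$ and then $\sigma_{li}=1$, $|a_{li}|=1$. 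Your subtraction device replaces the paper's figure-based claim (which is asserted rather than proved) with an elementary algebraic identity, and it additionally pins down the common sign $\sigma_{li}=1$ explicitly; the paper's framing, in exchange, offers the geometric picture of corner vectors. The sufficiency direction---exhibiting $\{1,1,-1\}$ and $\{1,-1,1\}$ as two distinct solutions---is identical in both.
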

\begin{proof}
(Sufficiency) If $|a_{li}|=1$ and $|a_{lj}|=|a_{lk}|$, it can be inferred from~(\ref{eqsbc}) that $\{\sigma_{li}, \sigma_{lj}, \sigma_{lk}\}=\{1, 1,-1\}$ or $\{\sigma_{li}, \sigma_{lj}, \sigma_{lk}\}=\{1, -1,1\}$. That is, (\ref{eqsbc}) does not result in a unique sign pattern.

(Necessity) Suppose there are two sign patterns both satisfying (\ref{eqsbc}). That is, it holds that
\begin{equation}\label{eqlemp}
\begin{bmatrix}
|a_{li}| & |a_{lj}| &|a_{lk}|
\end{bmatrix}
\begin{bmatrix}
\mathbf{v}_1\\
\mathbf{v}_2\\
\mathbf{v}_3
\end{bmatrix}
=
\begin{bmatrix}
1 & 1
\end{bmatrix}
\end{equation}
where $\mathbf{v}_1, \mathbf{v}_2, \mathbf{v}_3 \in \{[1~1], [-1~1], [-1~ -1], [1~ -1]\}$. This means a positive combination of
$\mathbf{v}_1, \mathbf{v}_2, \mathbf{v}_3$ equals to $[1~1]$ (see Fig.~\ref{fig:ax}). Consequently, there must be $[1~1]$ for one of  $\mathbf{v}_1, \mathbf{v}_2$, and $ \mathbf{v}_3$. Without loss of generality, we assume $\mathbf{v}_1=[1~1]$.
Next, we consider different choice of $\mathbf{v}_2$. If $\mathbf{v}_2$ equals to $[1,1]$ or $[-1, -1]$, we will have $\mathbf{v}_3$ equal to $[-1~-1]$ or $[1~1]$. In this way, two solutions $\{\sigma_{li}, \sigma_{lj}, \sigma_{lk}\}$ are actually identical.
If $\mathbf{v}_2$ equals to $[-1~1]$ or $[1~-1]$, $\mathbf{v}_3$ must equal to $[1~-1]$ or $[-1~1]$. Then according to~(\ref{eqsbc}), we know that $|a_{li}|=1$ and $|a_{lj}|=|a_{lk}|$. \end{proof}

\begin{figure}[h]
\begin{center}
      \psfrag{1}[lc]{$1$}
      \psfrag{-1}[cb]{$-1$}
      \includegraphics[width=0.25\textwidth] {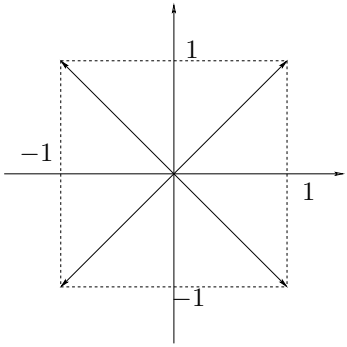}
      \caption{An illustration for the necessity proof. }
      \label{fig:ax}
\end{center}
\end{figure}

Next, we present a result on how to determine the sign pattern using the range based information when it can not be uniquely solved from~(\ref{eqsbc}).

\begin{lemma}
Given $|a_{li}|=1$ and $|a_{lj}|=|a_{lk}| \neq 0$, suppose $\angle ijk$ is an acute angle\footnote{If $\angle ijk$ is not  acute, then $\angle ikj$ must be acute and the conditions in the lemma can be modified accordingly.}.
\begin{enumerate}
\item $\sigma_{li}=-1$ if and only if \[d_{jl}=d_{ik}, \ d_{kl}=d_{ij}, \text{ and } d_{il}^2=2d_{ij}^2+2d_{ik}^2-d_{jk}^2.\]
\item If $\sigma_{li}=-1$, then $\{\sigma_{li}, \sigma_{lj}, \sigma_{lk}\}=\{-1, 1, 1\}$.
\item If $\sigma_{li}=1$ and $d_{jl}^2<d_{ij}^2+d_{il}^2$, then $\{\sigma_{li}, \sigma_{lj}, \sigma_{lk}\}=\{1, 1, -1\}$.
\item If $\sigma_{li}=1$ and $d_{jl}^2>d_{ij}^2+d_{il}^2$, then $\{\sigma_{li}, \sigma_{lj}, \sigma_{lk}\}=\{1, -1, 1\}$.
\end{enumerate}
\end{lemma}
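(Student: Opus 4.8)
The plan is to convert the algebraic sign question into a geometric one, exploiting that the areal coordinates are affine functions of the position $p_l$ and that they sum to $1$. First I would pin down the two admissible regimes. Since $a_{li}+a_{lj}+a_{lk}=1$ and $|a_{li}|=1$, we have $a_{lj}+a_{lk}=1-a_{li}\in\{0,2\}$; and since $|a_{lj}|=|a_{lk}|$ forces $a_{lj}=\pm a_{lk}$, the sum $a_{lj}+a_{lk}$ is either $0$ or $2a_{lj}$. Matching the two gives exactly two surviving cases: either $a_{li}=1$ with $a_{lj}=-a_{lk}$ (pattern $\{1,1,-1\}$ or $\{1,-1,1\}$), or $a_{li}=-1$ with $a_{lj}=a_{lk}=1$. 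This already establishes item~2: $\sigma_{li}=-1$ leaves only the pattern $\{-1,1,1\}$.

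For item~1 I would identify the $\sigma_{li}=-1$ case geometrically. By item~2 the coordinates are $(-1,1,1)$, so \eqref{bcd} yields $p_l=p_j+p_k-p_i$, i.e.\ $l$ is the reflection of $i$ through the midpoint of $jk$ and $i,j,l,k$ form a parallelogram. The forward implication is then a direct computation: the parallelogram relations $p_l-p_j=p_k-p_i$ and $p_l-p_k=p_j-p_i$ give $d_{jl}=d_{ik}$ and $d_{kl}=d_{ij}$, while $d_{il}$ equals twice the median from $i$ to $jk$, so the median-length identity gives $d_{il}^2=2d_{ij}^2+2d_{ik}^2-d_{jk}^2$. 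For the converse I would invoke Assumption~{\bf A}0: because $i,j,k$ are not collinear, the three distances to them determine the position of $l$ uniquely, and the reflection point already realizes exactly these three distances; hence the distance conditions force $l$ to be that point, so $\sigma_{li}=-1$.

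For items~3 and~4 I would work inside the regime $a_{li}=1$, parametrizing $l$ along the line through $i$ parallel to $jk$ by $p_l=p_i+t(p_k-p_j)$. Then \eqref{bcd} gives $a_{lj}=-t$ and $a_{lk}=t$, so $\mathrm{sign}(t)$ selects the pattern: $t<0$ gives $\{1,1,-1\}$ and $t>0$ gives $\{1,-1,1\}$. It remains to read off $\mathrm{sign}(t)$ from ranges. Expanding $d_{jl}^2-d_{ij}^2-d_{il}^2=-2(p_j-p_i)\cdot(p_l-p_i)$ and substituting $p_l-p_i=t(p_k-p_j)$ reduces the inner product to $-t\,d_{ij}d_{jk}\cos\angle ijk$, whence \[d_{jl}^2-(d_{ij}^2+d_{il}^2)=2t\,d_{ij}d_{jk}\cos\angle ijk.\] Here the acuteness hypothesis enters decisively: $\cos\angle ijk>0$ makes the right-hand side share the sign of $t$, which produces exactly the two stated equivalences.

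The main obstacle I anticipate is the sign bookkeeping in this last step — keeping the orientation conventions of the signed areas consistent with the parametrization so that the correspondences $\mathrm{sign}(t)\leftrightarrow\text{pattern}$ and $\mathrm{sign}(t)\leftrightarrow\text{range comparison}$ line up; this is precisely where the acute-angle assumption is indispensable, and where the footnote's symmetric variant for $\angle ikj$ would otherwise be invoked. A secondary point to state carefully is the uniqueness-of-trilateration argument underlying the converse in item~1, which rests on Assumption~{\bf A}0.
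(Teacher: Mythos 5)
Your proposal is correct and takes essentially the same route as the paper's proof: the same case split (the regime $a_{li}=-1$ identified with the parallelogram point, and $a_{li}=1$ with the line through $i$ parallel to $jk$), the same reliance on Assumption~A0 to resolve the two-point ambiguity in the converse of item~1, and the same use of acuteness of $\angle ijk$ to separate the two candidate positions in items~3 and~4. The differences are only stylistic: you replace the paper's cosine-law/alternate-angle computation with the parametrization $p_l=p_i+t(p_k-p_j)$ and the identity $d_{jl}^2-(d_{ij}^2+d_{il}^2)=2t\,d_{ij}d_{jk}\cos\angle ijk$, and you package the paper's perpendicular-bisector contradiction as standard trilateration uniqueness, both of which are sound.
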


\begin{proof}
1) (Necessity) If $\sigma_{li}=-1$, then we have $a_{li}=-1$. Moreover, since $|a_{lj}|=|a_{lk}|$, it follows from~(\ref{eqsbc}) that $a_{lj}=a_{lk}=1$. Thus, $S_{\Delta ljk}=-S_{\Delta ijk}$, $S_{\Delta lij}=S_{\Delta ijk}$ and $S_{\Delta lki}=S_{\Delta ijk}$. Comparing the sign pattern with the ones described in Fig.~\ref{fig:bc7}, we know that the only option of $l$ is at the location of $l^{'''}$ in Fig.~\ref{ediloc_sign}, which forms a parallelogram together with nodes $i$, $j$ and $k$. Hence, we can obtain directly that $d_{jl}=d_{ik}$, $d_{kl}=d_{ij}$. Furthermore, according to the parallelogram law, we have
\begin{equation}\label{eq:lll}
d_{il}^2=2d_{ij}^2+2d_{ik}^2-d_{jk}^2.
\end{equation}

(Sufficiency) If $d_{jl}=d_{ik}$ and $d_{kl}=d_{ij}$, we can draw two circles centered at $j$ and $k$ with radius $d_{ik}$ and $d_{ij}$, respectively. These two circles will have two intersection points. One of the two intersection points is $l^{'''}$ and we denote the other by $l^*$. From the necessity proof, it is known that when node $l$ is at the location of $ l^{'''}$, it satisfies~\eqref{eq:lll}. On the other hand, we will show that when node $l$ is at the location of $ l^{*}$, it does not satisfy~\eqref{eq:lll}. (To see this, it remains to show that $d_{il^{'''}}\neq d_{il^*}$. Suppose by contradiction that $d_{il^{'''}} = d_{il^*}$. Then, recalling the fact $d_{jl^{'''}}=d_{jl^*}$ and $d_{kl^{'''}}=d_{kl^*}$, we have that nodes $i$, $j$ and $k$  are on the perpendicular bisector of the line segment $l^{'''}l^*$ and so they are colinear, a contradiction to assumption A0.) Therefore, it can be concluded that due to the condition $d_{il}^2=2d_{ij}^2+2d_{ik}^2-d_{jk}^2$, node $l$ must lie at the location of $l^{'''}$.  Thus, according to the sign patterns described in Fig.~\ref{fig:bc7}, we can obtain that $\sigma_{li}=-1$.

\begin{figure}[h]
\begin{center}
      \psfrag{i}{$i$}
      \psfrag{j}{$j$}
      \psfrag{k}{$k$}
      \psfrag{k1}{$k^{'}$}
      \psfrag{l0}{$l^{'''}$}
      \psfrag{l1}{$l^{'}$}
      \psfrag{l2}{$l^{''}$}
      \includegraphics[height=0.3\textwidth] {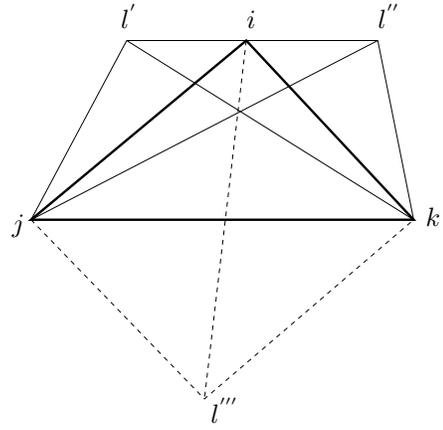}
      \caption{An example of $\Delta ijk$ and node $l$.}
      \label{ediloc_sign}
\end{center}
\end{figure}

2) If $\sigma_{li}=-1$, then $\{\sigma_{li}, \sigma_{lj}, \sigma_{lk}\}=\{-1, 1, 1\}$, which is shown in the necessity proof of 1).

3) If $\sigma_{li}=1$, then we have $a_{li}=1$. For this case,  $S_{\Delta ljk}=S_{\Delta ijk}$ according to~(\ref{def:a}). Therefore, node $l$ must be on the line that is parallel to the edge $jk$ and crosses node $i$. On this line, there are two nodes, saying node $l^{'}$ and $l^{''}$ as shown in Fig.~\ref{ediloc_sign}, whose distances to node $i$ are equal to $d_{il}$.

For the triangle $\Delta{ijl^{'}}$, according to the cosine law, it holds that
\[
d_{jl^{'}}^2=d_{ij}^2+d_{il^{'}}^2-2d_{ij}d_{il^{'}}\cos\angle jil^{'}.
\]
Since $l^{'}l^{''}$ is parallel with $jk$, we have $\angle jil^{'}=\angle ijk$. Then we know $d_{jl^{'}}^2<d_{ij}^2+d_{il^{'}}^2$ because  $\angle ijk$ is an acute angle. Similarly, for the triangle $\Delta{ijl^{''}}$, we could obtain that $d_{jl^{''}}^2>d_{ij}^2+d_{il^{''}}^2$.

Therefore, if $d_{jl}^2<d_{ij}^2+d_{il}^2$, then node $l$ must be at the locatio of $l'$. Thus, according to the sign patterns described in Fig.~\ref{fig:bc7}, we obtain that $\{\sigma_{li}, \sigma_{lj}, \sigma_{lk}\}=\{1, 1, -1\}$.

4) Following the argument in 3), we know that if $d_{jl}^2>d_{ij}^2+d_{il}^2$, then node $l$ must lie at the location of $l^{''}$. Then again from the sign patterns described in Fig.~\ref{fig:bc7}, we obtain that $\{\sigma_{li}, \sigma_{lj}, \sigma_{lk}\}=\{1, -1, 1\}$. \end{proof}

Finally, we summarize the above results to provide an algorithm of determining the sign pattern based on the range measurement information. The pseudo code is given in Algorithm~\ref{alg:sign}.

\algsetup{indent=2em}
\newcommand{\factorial}{\ensuremath{\mbox{\sc Cooperative Expression }}}
\begin{algorithm}[h!]
\caption{Determining the sign pattern of node $l$'s barycentric coordinate. }\label{alg:sign}
{\bf Input:} $|a_{li}|$, $|a_{lj}|$, $|a_{lk}|$, $d_{li}$, $d_{lj}$, $d_{lk}$, $d_{ij}$, $d_{ik}$, $d_{jk}$. \\
{\bf Output:} $\{\sigma_{li}, \sigma_{lj}, \sigma_{lk}\}$.
\begin{algorithmic}
\STATE{Solve eq.~(\ref{eqsbc})}
\IF{the solution is unique}
\STATE{{\bf Return} $\sigma_{li}, \sigma_{lj}, \sigma_{lk}$}
\ELSIF{one of $|a_{li}|$, $|a_{lj}|$, $|a_{lk}|$ equals to 0}
\STATE{Determine $\{\sigma_{li}, \sigma_{lj}, \sigma_{lk}\}$ according to~(\ref{eqzero})}
\ELSIF{$d_{jl}=d_{ik}$, $d_{kl}=d_{ij}$ and $d_{il}^2=2d_{ij}^2+2d_{ik}^2-d_{jk}^2$ }
\STATE{$\{\sigma_{li}, \sigma_{lj}, \sigma_{lk}\}=\{-1,1,1\}$}
\ELSIF{$d_{jl}^2<d_{ij}^2+d_{il}^2$} 
\STATE{$\{\sigma_{li}, \sigma_{lj}, \sigma_{lk}\}=\{1, 1, -1\}$}
\ELSIF{$d_{jl}^2>d_{ij}^2+d_{il}^2$}
\STATE{$\{\sigma_{li}, \sigma_{lj}, \sigma_{lk}\}=\{1, -1, 1\}$}
\ENDIF
\end{algorithmic}\label{alg:sign}
\end{algorithm}

\section{Distributed localization algorithm}\label{sec4}

In this section, we develop a convergent iterative algorithm for localization, that is, to solve $\mathbf{p}_s$ from~\eqref{i_diloc}. Notice that for a general barycentric coordinate, the spectral radius of $C$ may be larger than 1. It means that the iterative form presented in~(\ref{edi2}) may not converge. In the sequel, we will modify it to have a new convergent iterative algorithm to compute the coordinates of localizable sensor nodes. 

Pre-multiplying a diagonal matrix $K$ on both sides of (\ref{i_diloc}), we can obtain
\[
K\mathbf{p}_s=KC\mathbf{p}_s+KB\mathbf{p}_a.
\]
Adding $\mathbf{p}_s$ on both sides of the above equation and packing the terms, we have
\[
\mathbf{p}_s=(I-K(I-C))\mathbf{p}_s+KB\mathbf{p}_a.
\]
Recall that $\mathbf{p}_a$ is constant. So we could consider the following iterative algorithm 
\begin{equation}\label{sys}
\mathbf{z}_s(t+1)=(I-K(I-C))\mathbf{z}_s(t)+KB\mathbf{p}_a,
\end{equation}
where $\mathbf{z}_s$ is the estimate of the coordinates of the normal sensor nodes. Here, the key is to find an appropriate diagonal matrix $K$ so that  $I-K(I-C)$ is Schur. 

The above iterative form of localization can be implemented in a distributed way, that is, 
\begin{equation}\label{isys}
z_i(t+1)=z_i(t)-k_i\left(z_i(t)-\underset{j\in\mathcal{N}_i}{\Sigma}a_{ij}z_j(t)\right)
\end{equation}
where $z_i$ is the estimate of sensor node $i$'s coordinate, $k_i$ is the $i$th diagonal entry of $K$, and $a_{ij}$, $j \in \mathcal{N}_i$, is the barycentric coordinate of node $i$ with respect to its three neighbors. 

However, finding a diagonal pre-conditioner $K$ in a completely distributed way to make $I-K(I-C)$ Schur is a challenging task. Next we introduce a cluster-based approach for the design of $K$, which is partially distributed and does not require to collect all the information of the whole network. 
For a sensor network $\mathcal{G}$, if the anchor nodes are not far away from each other, then suppose $\mathcal{G}$ can be partitioned into a set of clusters $\mathcal{G}_0$, $\mathcal{G}_1$, $\dots$, $\mathcal{G}_m$ such that $\mathcal{G}_0$ is the cluster of anchor nodes, and for any cluster $\mathcal{G}_s$, $s\in\{1,\cdots,m\}$, the neighbors used to define the barycentric coordinate of any node in $\mathcal{G}_s$ belong to $\mathcal{G}_0 \cup \cdots \cup \mathcal{G}_{s}$. This is the case when the set of clusters is sequentially localizable. It is common in practice as the neighboring topology is usually dependent on  the configuration in the Euclidean space and limited communication ranges. How to detect such a set of sequential clusters is of independent interest. Readers may refer to \cite{yang} and \cite{wang}.
 
For each cluster $\mathcal{G}_s$, $s=1, \dots, m$, we propose a scheme to design $K_s$ (the corresponding block of $K$ for the cluster $\mathcal{G}_s$). The idea is inspired by the work of \cite{cdc} and \cite{au2012} using the continuity property of eigenvalues with respect to the values of $K_s$. Denote by $L_s$ the square block submatrix in $I-C$ corresponding to the cluster $\mathcal{G}_s$ (taking the rows and columns of $I-C$ indexed by the nodes in $\mathcal{G}_s$). Suppose the total number of nodes in $\mathcal{G}_s$ is $n_s$. We denote $K_s=\text{diag}(k_{s_1},k_{s_2},\dots,k_{s_{n_s}})$ and denote by $L_s^j$, $j=1, \dots, n_s$, the $j$th principal submatrix of $L_s$.  The procedure of designing $K_s$ for each cluster $\mathcal{G}_s$ is summarized in Algorithm~\ref{alg:find2}.

\begin{algorithm}[htb]
\caption{Finding the diagonal pre-conditioner $K_s$ for cluster $\mathcal{G}_s$.}
\label{alg:find2}
\begin{algorithmic}[1]
\STATE{\bf{Input:} $L_s$}
\STATE{\bf{Output:} $K_s=\textnormal{diag}(k_{s_1},k_{s_2},\dots,k_{s_{n_s}})$}
\FOR{$j=1:n_s$}
\STATE{Find $k'_{s_j}$ such that the eigenvalues of $\text{diag}\{k'_{s_1},\dots,k'_{s_j} \}L_s^j$ lie in the open right-half-plane. }
\ENDFOR
\STATE{Find sufficiently small $\varepsilon>0$ to make the eigenvalues of $\varepsilon \text{diag}\{k'_{s_1},\dots,k'_{s_{n_s}} \}L_s$ inside the unit circle centered at $(1,0)$.}
\RETURN{$k_{s_j}=\varepsilon k'_{s_j}$, $j =1, \dots, n_s$.}
\end{algorithmic}
\end{algorithm}

 Next, we discuss why Algorithm~\ref{alg:find2} can succeed in finding the diagonal pre-conditioner $K_s$. 
According to the loop in the algorithm, it is certain that an appropriate $k'_{s_1}$ can be found first so that $k'_{s_1}L_s^1$ is in the open right-half-plane. Thus, the eigenvalues of
\[
\begin{bmatrix}
k'_{s_1} & 0 \\
0 &0
\end{bmatrix}
L_s^1
\]
are $k'_{s_1}L_s^1$ and $0$. By the continuity property of eigenvalues, we can then find  $k'_{s_2}$ in the neighborhood of the origin such that the eigenvalues of
\[
\begin{bmatrix}
k'_{s_1} & 0\\
0 & k'_{s_2} \\
\end{bmatrix}
L_s^2
\]
both lie in the open right-half-plane. Repeating the argument leads to the finding of a set of $k'_{s_j}$, $j = 1, \dots, n_s$, such that the eigenvalues of $\text{diag}\{k'_{s_1},\dots,k'_{s_{n_s}} \}L_s$ all lie in the open right-half-plane, for which, a small $\varepsilon>0$ can then be chosen to shrink the eigenvalues of $\varepsilon \text{diag}\{k'_{s_1},\dots,k'_{s_{n_s}} \}L_s$ inside the unit disk centered at $(1,0)$.

To implement the algorithm of finding a diagonal pre-conditioner for the cluster $\mathcal{G}_s$, a randomly selected node in the cluster acts as a cluster head and collects the barycentric coordinates of nodes in the same cluster, i.e., $L_s$. It computes an appropriate $K_s$ according to Algorithm~\ref{alg:find2} and then sends $k_{s_j}$ to the individual nodes in the cluster. If the cluster size is medium, the required communication cost is acceptable.  

After obtaining the diagonal pre-conditioners for all clusters, the localization algorithm~\eqref{sys} is fully distributed, requiring only the exchange of the estimate from the neighbors. Moreover, it can be known that the iterative algorithm~\eqref{sys} is globally asymptotically convergent due to its linear form, while most existing localization work (e.g., \cite{subgra}) based the sub-gradient method only ensures local convergence.

\section{Simulation}\label{sec5}

In this section, a sensor network with 12 nodes is considered. As shown in Fig.~\ref{ori}, three anchor nodes are connected by black lines and other nine sensor nodes are marked by red stars. The blue lines with arrows represent the neighboring topology in localization. In this example, both assumptions A1 and A2 are not satisfied. 

 The 12 nodes are grouped into four clusters, i.e.,  $\mathcal{G}_0 =\{1, 2, 3\}$, $\mathcal{G}_1=\{4, 5, 6\}$, $\mathcal{G}_2=\{7, 8, 9\}$, and $\mathcal{G}_3=\{10, 11, 12\}$. The barycentric coordinate of each node is calculated according to Algorithm~\ref{alg:sign} based on the range measurement information. The diagonal pre-conditioner $K_s$, $s=1, 2, 3$, is obtained utilizing Algorithm~\ref{alg:find2}. The coordinate of each node is then iteratively calculated in terms of~\eqref{isys} in a distributed way. The trajectories of the estimates are shown in Fig.~\ref{trace}, from which it is seen that the estimates asymptotically converge to the true coordinates from an arbitrary initial guess.  

The residual error defined as $\frac{||\mathbf{z}_s(t)-\mathbf{p}_s||}{||\mathbf{z}_s(0)-\mathbf{p}_s||}$ is plotted in Fig.~\ref{converge} with respect to the iteration steps. Though the localization algorithm is executed in parallel, the convergence process takes three stages, as being observed in  Fig.~\ref{converge}, because the localization of a cluster depends on the localization of the cluster closer to the anchor nodes.  

\begin{figure}[h]
\centering
      \subfigure[Original network topology.]{
      \includegraphics[width=0.395\textwidth] {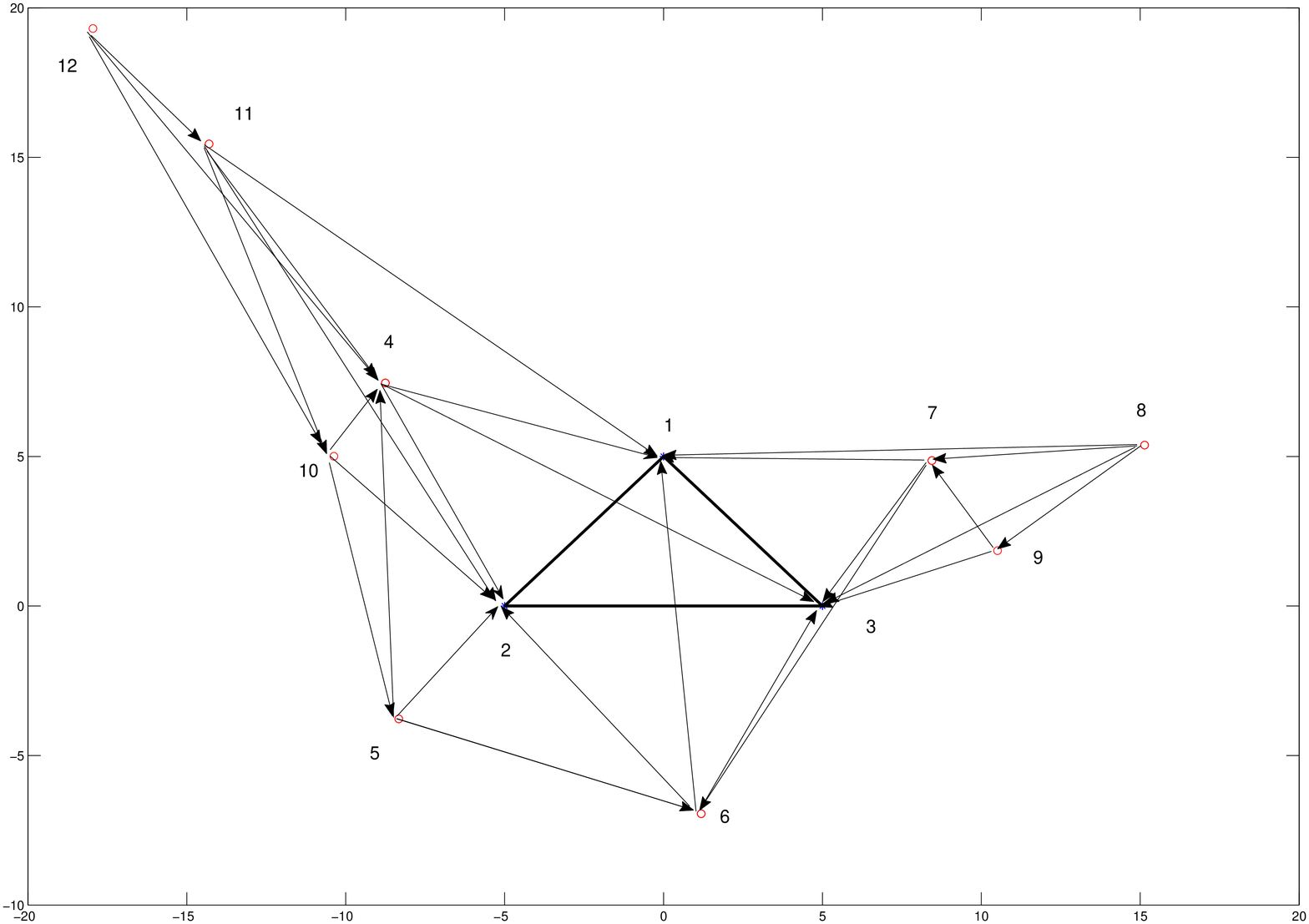}
      \label{ori}}
      \subfigure[Trajectories of the coordinate estimates.]{
      \includegraphics[width=0.395\textwidth] {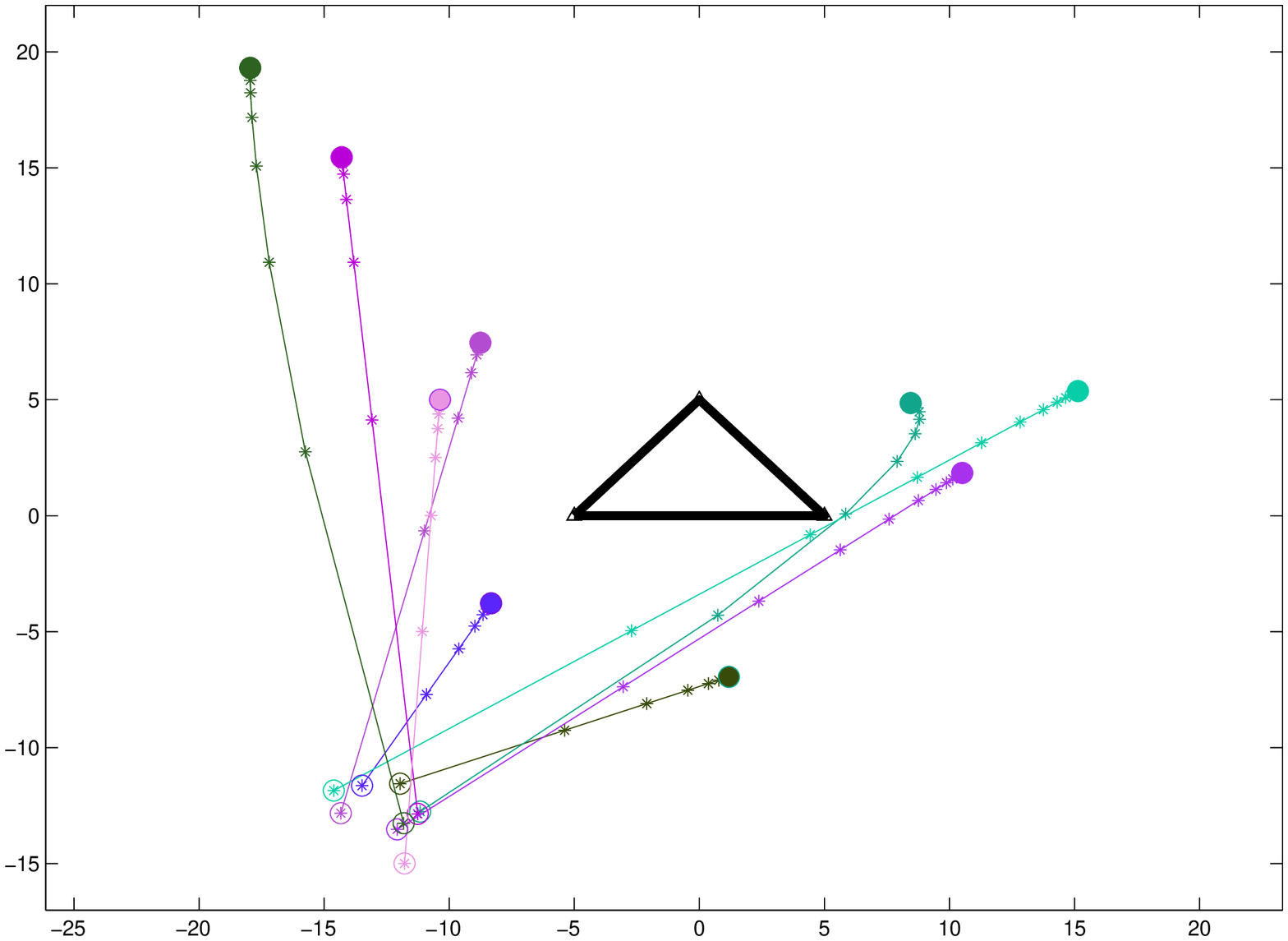}
      \label{trace}}
      \subfigure[$\frac{||\mathbf{z}_s(t)-\mathbf{p}_s||}{||\mathbf{z}_s(0)-\mathbf{p}_s||}$ w.r.t. $t$]
      {
      \includegraphics[width=0.395\textwidth] {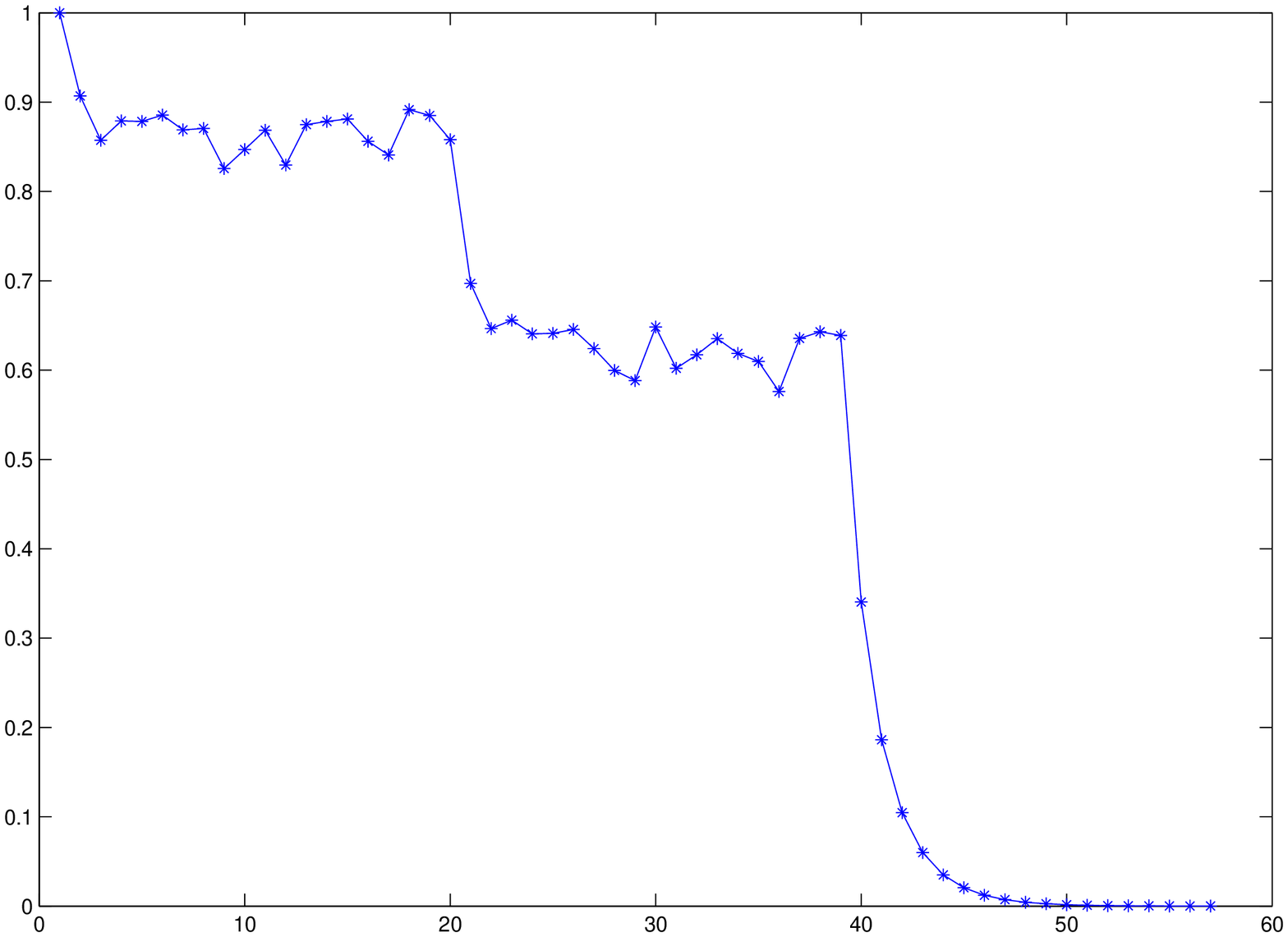}
      \label{converge}
      }
      \caption{A simulation result.}\label{sim1_1}
\end{figure}


\section{Conclusion}\label{sec6}

In this paper, we develop a distributed algorithm to compute the locations of sensor nodes based on the barycentric coordinates. Two critical problems are solved, leading to the success of globally convergent localization. First, for a general configuration that does not require every node to be inside a convex hull of its neighbors, a range information based algorithm is proposed to determine the signs of the barycentric coordinates and therefore the barycentric coordinates. Second, a distributed iterative algorithm is obtained with the global convergence ensured diagonal pre-conditioner designed based on a partially distributed cluster scheme. Future work includes analysis of localizability and convergence rate of the proposed approach, and localization performance in the presence of measurement noises.

\end{document}